\newtheorem{thm}{Theorem}
\newenvironment{definition}[1][Definition.]{\begin{trivlist}
\item[\hskip \labelsep {\bfseries #1}]}{\end{trivlist}}
\title{On a variant of Monotone NAE-3SAT and the Triangle-Free Cut problem.}
\author{Peiyush Jain, Microsoft Corporation.}
\begin{document}
\maketitle

    \abstract{
In this paper we define a restricted version of Monotone NAE-3SAT and show that it remains NP-Complete even under that restriction. We expect this result would be useful in proving NP-Completeness results for problems on $k$-colourable graphs ($k \ge 5$). We also prove the NP-Completeness of the Triangle-Free Cut problem.
    } 

\section{Introduction}

SAT was first proven to be NP-Complete in a seminal paper by Cook \cite{Cook} which led to the current theory around NP-Completeness.  Many variants of SAT have been shown to be NP-Complete and the Not-All-Equal 3SAT \cite{Schaeffer} (or NAE-3SAT in short) has been found to be particularly useful.
\\

SAT variants have been especially useful in proving NP-Completeness of various graph related problems for restricted classes of graphs. For instance, the planar version of 3SAT, Planar-3SAT is known to be NP-Complete \cite{Planar3SAT} and has been useful in proving NP-Completeness of planar graph related problems. Unfortunately, the planar version of NAE-3SAT is in P \cite{Moret}.
\\

 We consider a version of NAE-3SAT which might prove useful in proving NP-Completeness results of $k$-colourable graphs, for $k \ge 5$. For instance, from the result of this paper, it immediately follows that MAXCUT is NP-Complete for $5$-colourable graphs. Of course, this is weaker than the currently known result that MAXCUT is NP-Complete for $3$-colourable graphs \cite{MaxCut}, but we expect this result might still prove to be useful for other problems.
\\

We also show that the triangle-free cut problem is NP-Complete. This is the vertex colouring variant of the Monochromatic Triangle Problem, which is known to be NP-Complete \cite{GJ}.
\section {Preliminaries}

In this section we define the problems about which we are going to prove NP-Completeness results.

\subsection{$k$-colourable Monotone NAE-3SAT}
Given any monotone CNF expression $\phi$ with exactly three distinct (unnegated) literals per clause, we associate a simple undirected graph $G(\phi)$ with $\phi$ as follows:\\

Each variable of $\phi$ is a vertex of the graph, and two vertices have an edge between them iff they appear in some clause together. A variant is to have an additional vertex for each clause and connect them to the variables which appear in them. The result of this paper holds true for both the variants.
\\

We can now define our problem.
\begin{definition}{The {\bf $k$-colourable Monotone NAE-3SAT decision problem}: {\it Given a monotone CNF expression $\phi$ with exactly three distinct variables in each clause, such that the corresponding graph $G(\phi)$ defined earlier is $k$-colourable, is the expression $\phi$ not-all-equal satisfiable?}}
\end{definition}

It is easy to see that $4$-colourable monotone NAE-3SAT is in P.
\\

We show that $5$-colourable Monotone NAE-3SAT is NP-Complete, by going via a reduction that also proves the NP-Completeness of the Triangle-Free cut problem (defined below).
\\

\subsection {Triangle Free Cut}

Suppose we are given a graph and we want to know if there is a colouring of the vertices with two colours so that there is no monochromatic triangle. This problem can be stated in terms of cuts and triangle-free graphs.
\\

\begin{definition}{ {\bf Triangle-Free Cut}: Given a simple, undirected graph $G(V,E)$ with vertex set $V$ and edge set $E$, a cut of G is a partition of $V$ into two non-empty sets $V_1$ and $V-V_1$. We say that the cut is {\it triangle-free} if the induced sub-graphs of G on $V_1$ and $V-V_1$ both do not contains any triangles.
}
\end{definition}

 {
Thus we have:
\begin{definition}{The {\bf Triangle-Free Cut decision problem}: {\it Given a simple, undirected graph $G(V,E)$, is there a triangle-free cut of $G$?} }
\end{definition}

 {
It is easy to see that this problem is in NP.
}
\\

 {
 For the class of $4$-colourable graphs, the problem of determining the existence of a triangle-free cut is trivial: Just form a cut by putting vertices of two colours in one set and the remaining coloured vertices in the other.
}
\\

We show that Triangle-Free Cut problem is NP-Complete for the class of $5$-colourable and bounded degree ($\leq 8$) graphs.

\section{NP-Completeness of Triangle-Free Cut}

We start with the version of Monotone-NAE-3SAT in which each clause has exactly three distinct variables \cite{MoretBook}. Given such a Monotone-NAE-3SAT expression, we construct a graph which has a triangle free cut if and only if the expression is NAE-satisfiable. We show that the constructed graph has the property that it is $5$-colourable and has maximum degree $\leq 8$.
\\
i.e. we prove the following theorem:

\begin{thm}
The Triangle-Free Cut decision problem is NP-Complete for the class of $5$-colourable and maximum degree $\le 8$ graphs.
\end{thm}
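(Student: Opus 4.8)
The plan is to reduce from Monotone NAE-3SAT with exactly three distinct variables per clause, which is NP-complete \cite{MoretBook}. Given such a formula $\phi$ with clauses $C_1,\dots,C_m$ over variables $x_1,\dots,x_n$, I would build a graph $G_\phi$ so that a triangle-free cut of $G_\phi$ corresponds to a not-all-equal satisfying assignment of $\phi$, exploiting the observation that in a triangle-free cut a triangle may not be monochromatic, which is precisely a ternary not-all-equal constraint on its three vertices. Concretely, for each clause $C_j=\{x_a,x_b,x_c\}$ I would introduce three fresh ``literal-copy'' vertices and join them into a triangle $T_j$; any triangle-free cut must then place the three corners of $T_j$ not all on one side, i.e.\ satisfy the NAE constraint for $C_j$. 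The only remaining thing to arrange is that all copies of a single variable $x_i$ across the clauses containing it land on the same side of the cut, and for this I would attach, for each variable, a \emph{consistency gadget} linking its copies and forcing them to agree in every triangle-free cut.

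Given such a gadget, correctness is routine in both directions: from a triangle-free cut, the consistency gadgets guarantee that each variable's copies agree, so an assignment can be read off, and the clause triangles being non-monochromatic yields NAE-satisfaction; conversely, from an NAE-satisfying assignment we colour all copies of $x_i$ by its truth value, making every $T_j$ non-monochromatic, and then extend through each consistency gadget, which by design admits a monochromatic-triangle-free $2$-colouring for each fixed common value of its terminals. Non-emptiness of both sides is automatic once $\phi$ has a clause, since a non-monochromatic triangle meets both sides, and the clause-free case is disposed of by outputting a fixed instance. It then remains to check the two structural claims. For $5$-colourability I would three-colour each clause triangle and colour the consistency gadgets reusing those three colours wherever possible, spending the two extra colours only locally so that the global bound is $5$. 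For the degree bound, each copy vertex has degree $2$ inside its triangle plus the contribution of the consistency gadget, so I would realise the gadget as a bounded-degree structure -- for instance a chain of small identical units, each forcing equality of two consecutive copies -- contributing at most $6$ more to any copy and keeping all internal gadget vertices of degree at most $8$; membership in NP was already noted.

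The hard part is the design of the consistency gadget. A triangle-free subgraph imposes no constraint at all on a cut, so any gadget that genuinely forces two vertices to the same side must itself contain triangles, which simultaneously stresses the degree budget and the $5$-colourability and risks creating ``accidental'' triangles (pairwise-adjacent triples that were not intended as constraints) that could over-constrain the instance or interact badly with the clause triangles. The natural source of such a gadget is a triangle-hypergraph that is \emph{not} $2$-colourable but becomes $2$-colourable, and forces equalities among a chosen triple of vertices, once a single triangle is removed -- the smallest example being a Fano-plane-type configuration -- and the engineering work is to realise such a configuration as a graph of small maximum degree with no harmful accidental triangles, verify it is $5$-colourable, and chain copies of it so that arbitrarily many variable-copies are transitively forced to agree. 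Once this gadget and its three properties (forces agreement; admits a valid $2$-colouring for each fixed terminal value; bounded degree and $5$-colourable) are in hand, assembling $G_\phi$ and verifying the claimed bounds is mechanical.
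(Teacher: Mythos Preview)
Your high-level plan is exactly the paper's: reduce from Monotone NAE-3SAT with three distinct variables per clause, introduce a fresh copy of each variable for each occurrence, turn every clause into a triangle on its three copies, and link the copies of each variable by a chain of equality gadgets. The paper phrases the copy step as a preliminary rewriting of the formula (replace a $k$-fold variable $x$ by $y_1,\dots,y_k$ and add two-literal clauses $(y_i\vee\neg y_{i+1})$), but that is just your ``literal-copies plus consistency chain'' in other words.

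The genuine gap is that you never exhibit the equality gadget, and the Fano-plane direction you sketch is both more complicated than necessary and unlikely to meet the degree and $5$-colourability budgets cleanly (you would need to realise a $7$-point, $7$-line hypergraph as the triangle set of a graph with no accidental triangles, bounded degree, and a $5$-colouring extendable from arbitrarily coloured terminals). The paper's gadget is far simpler: to force $x$ and $y$ to the same side, take two copies of $K_4$ glued along a common triangular face $\{a,b,c\}$, with $x$ and $y$ the two apices (so $x,y$ are non-adjacent and each of degree~$3$; $a,b,c$ each have degree~$4$). In any triangle-free cut the face $\{a,b,c\}$ splits $1$--$2$, and the majority pair forms a triangle with each of $x$ and $y$, forcing both apices to the minority side; conversely, putting $x,y$ together with the minority vertex on one side is a valid triangle-free cut. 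The gadget has only five vertices, so it is $5$-colourable for \emph{any} prescribed colours on $x$ and $y$, and each internal vertex lies in exactly five triangles. A variable copy then sits in at most one clause triangle and at most two gadgets, giving degree $\le 2+3+3=8$. With this concrete gadget plugged in, the rest of your argument goes through verbatim.
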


\begin{proof}
Suppose we are given a monotone expression with variables $x_{1}, x_{2}, ..., x_{n}$ and clauses $c_{1},..., c_{m}$. We first transform this expression as follows:
\\

If a variable $x = x_1$(say) appears $k > 1$ times, we replace it with $k$ variables $y_{j}, 1 \leq j \leq k$ and add $k-1$ clauses to ensure that the $y_j$ variables get the same truth values.
\\

wlog, assume that $x = x_1$ appears in clauses $c_1, c_2, ..., c_k$.

Say $c_i = (x \vee a_i \vee b_i)$, $1 \leq i \leq k$ 

We replace these clauses with the $k$ new clauses: 

$c_{i}^{'} = (y_i \vee a_i \vee b_i)$, $1 \leq i \leq k$ 

and add $k-1$ new clauses 

$c_{i}^{''} = (y_i \vee \neg y_{i+1}), 1 \leq i \leq k-1$
\\

We do this for each repeated variable. This transformation is a polynomial time transform and the resulting expression is NAE-satisfiable if and only if the original expression is NAE-satisfiable.
\\

The resulting expression has the following properties:\\
\begin{enumerate}
\item Each clause is of the form $(x \vee y \vee z)$ or $(x \vee \neg y)$.
\item Each variable appears no more than three times.
\item Any two variables appear together in no more than one clause.
\item A variable appears exactly once in a clause with three literals. 
\item If a variable appears exactly once or twice, it appears un-negated.
\item If a variable appears thrice, the negated literal appears exactly once.
\end{enumerate}

We now construct a graph $G(V,E)$ from the expression we just created.
\\

We associate a vertex with each variable. For any clause with exactly three variables (note that the literals here are unnegated), we connect the corresponding vertices, forming a triangle. Observe that no two such triangles have a common vertex or edge.
\\

For clauses of the form $(x \vee \neg y)$, we 'connect' them using the glued-tetrahedra gadget in figure 1.
\\
\begin{figure}[h]
    \caption{Tetrahedra Gadget}
    \begin{center}
        \includegraphics[scale=0.9]{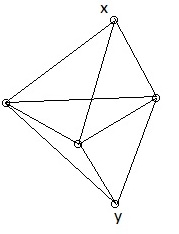}
    \end{center}
\end{figure}

In the gadget, notice that $x$ and $y$ are not adjacent, of degree 3 and that it is $5$-colourable even if we assign arbitrary colours to $x$ and $y$. Also notice that any triangle-free cut of this gadget necessarily has $x$ and $y$ in the same set and that there actually exists a triangle-free cut of this gadget.
\\

Now the graph so formed is $5$-colourable:\\
We pick a three literal clause and colour the corresponding vertices of the triangle arbitrarily with three different colours. Given any clause of the form $(x \vee \neg y)$, we already have a colouring of $x$ and $y$ and the vertices in the gadget connecting them can be coloured so as to maintain 5-colourability.
\\

Also notice that the maximum degree of any vertex in this graph is $8$, as it can be part of atmost two gadgets and one triangle.
\\

Now consider any triangle in the graph. This is either a triangle formed due to a three variable clause, or is a triangle completely within a gadget instance.
\\

We now claim that this graph has a triangle-free cut if and only if the expression is NAE-Satisfiable.
\\
\subsection{NAE-Satisfiablity implies existence of triangle-free cut}
Assume the expression was NAE-Satisfiable. Now put the variables which are set to true in one set and the variables which are set to false in another.  We now show that this can be extended to give a triangle-free cut of the graph.
\\

For any triangle which is formed because of a clause, since not all variables are true (or false), at least one edge of the triangle is part of the cut.
\\

Any clause of the form $(x \vee \neg y )$ is NAE-Satisfiable iff both $x$ and $y$ get assigned true, or both get assigned false. Now this implies that the corresponding gadget vertices can be assigned to the vertex sets so that all the triangles in the gadget are 'cut' by the cut.
\\

Thus, if the expression is NAE-Satisfiable, then the graph has a triangle-free cut.
\subsection{Existence of triangle-free cut implies NAE-Satisfiability}

Now assume that the graph has a triangle-free cut. Pick any one vertex set of the cut and set the variables in those to true. Set the remaining variables to false. Now this assignment makes the expression NAE-satisfiable.
\\

Each triangle corresponding to a three variable clause has at least one true and one false vertex, thus making the clause NAE-Satisfiable. For any clause of the form $(x \vee \neg y)$, $x$ and $y$ are necessarily in the same set, thus making the clause NAE-Satisfiable.
\\

This proves the NP-Completeness of the triangle-free cut problem on $5$-colourable, bounded degree ($\leq 8$) graphs.
\end{proof}

\section{NP-Completeness of $k$-colourable Monotone-NAE-3SAT}

We prove the following:

\begin{thm}
$k$-colourable Monotone-NAE-3SAT is NP-Complete for $k \ge 5$. Moreover, it remains NP-Complete if each variable is restricted to appear
no more than $7$ times.
\end{thm}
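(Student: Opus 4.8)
The plan is to reduce from Monotone NAE-3SAT with exactly three distinct variables per clause (the variant of \cite{MoretBook} used in the proof of Theorem~1), recycling the graph constructed there. Given such an instance $\phi$, first run the polynomial-time transformation of Theorem~1 to obtain the equivalent expression satisfying properties (1)--(6), and build the graph $G(V,E)$ exactly as in that proof: a vertex-disjoint triangle for each $3$-literal clause and a glued-tetrahedra gadget for each $2$-literal clause. We have already shown there that $G$ is $5$-colourable and has maximum degree $\le 8$. Now let $\phi'$ be the monotone CNF having one clause $(u\vee v\vee w)$ for every triangle $\{u,v,w\}$ of $G$. Every clause of $\phi'$ has three distinct unnegated variables, so $\phi'$ is a genuine Monotone NAE-3SAT instance, and the construction is clearly polynomial.

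For correctness, observe that an assignment of truth values to $V$ NAE-satisfies $\phi'$ exactly when no triangle of $G$ is monochromatic, i.e.\ exactly when the induced partition of $V$ is a triangle-free cut (both sides are automatically non-empty as soon as $G$ contains a triangle, which it does whenever $\phi$ is non-trivial; the degenerate empty-expression case is mapped to a fixed yes-instance). Hence, by Theorem~1, $\phi'$ is NAE-satisfiable iff $\phi$ is. Membership of $k$-colourable Monotone NAE-3SAT in NP is immediate (guess the assignment, check each clause), so it only remains to verify that $\phi'$ is an admissible input: that $G(\phi')$ is $5$-colourable and that no variable occurs more than $7$ times. Two variables are adjacent in $G(\phi')$ only when they lie in a common clause of $\phi'$, hence in a common triangle of $G$, so $G(\phi')$ is a subgraph of $G$ and inherits $5$-colourability; in the clause-vertex variant one further attaches a degree-$3$ vertex to each triangle, which can always be coloured from the at least two colours not used on that triangle.

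It remains to bound the number of occurrences of a variable of $\phi'$, which equals the number of triangles of $G$ through the corresponding vertex. By the analysis in the proof of Theorem~1, every triangle of $G$ is either one of the clause-triangles or lies entirely inside a single gadget. A variable-vertex lies in at most one clause-triangle (by property (4) it is in at most one $3$-literal clause, and these triangles are vertex-disjoint); inside each gadget it has degree $3$, so at most the three pairs among its neighbours can close a triangle, giving at most $3$ triangles there; and it belongs to at most two gadgets; hence it lies in at most $1+3+3 = 7$ triangles. A gadget-internal vertex lies only in triangles of its own fixed, constant-size gadget, and one verifies by inspecting Figure~1 that this count is at most $7$. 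Thus every variable of $\phi'$ occurs at most $7$ times. Finally, a $5$-colourable instance is $k$-colourable for every $k\ge 5$, so this single reduction proves NP-completeness for all $k\ge 5$ at once.

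I expect the fiddly step to be the triangle count for the internal vertices of the glued-tetrahedra gadget: one must confirm from the actual gadget of Figure~1 that no such vertex sits in more than $7$ triangles (and, should some vertex exceed this, replace the gadget by a slightly larger but equivalent one that distributes its triangles more sparsely). The constant $7$ in the statement comes precisely from the variable-vertex bound $1+3+3$. The remaining issues — the empty-expression edge case and the non-emptiness requirement of a cut — are routine but worth spelling out explicitly.
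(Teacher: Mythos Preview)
Your proposal is correct and follows essentially the same approach as the paper: turn the triangles of the graph $G$ from Theorem~1 into the clauses of a new monotone expression $\phi'$, and observe that NAE-satisfiability of $\phi'$ is precisely the triangle-free cut property of $G$. The paper's triangle count is in fact slightly sharper than yours --- the three internal vertices of the glued-tetrahedra gadget each lie in exactly $5$ triangles, so your worry about that step is unfounded, and the bound of $7$ is attained only by variable-vertices via your $1+3+3$ calculation.
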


\begin{proof}
It is easy to see that the above reduction to triangle-free cut carries over to the $5$-colourable Monotone NAE-3SAT in a natural manner (by making the vertices variables, and triangles the clauses) such that the graph has a triangle free cut if and only if the corresponding $5$-colourable Monotone NAE-3SAT expression is satisfiable. Moreover, each vertex of the graph appears in at most $7$ triangles ($2$ gadgets and $1$ clause triangle or $5$ triangles for a gadget vertex) and so each variable in the expression appears no more than $7$ times.
\\

Thus $5$-colourable Monotone NAE-3SAT is NP-Complete and remains so even under the restriction that each variable appears no more than 7 times.
\end{proof}

\section{Conclusion}
We have shown that the decision versions of $k$-colourable Monotone NAE-3SAT and Triangle-Free Cut problem for $k$-colourable graphs are NP-Complete for $k \ge 5$. The next steps would be to prove the NP-Completeness (or give polynomial time algorithms) of the corresponding search problems when $k = 3$ or $k = 4$.

\bibliographystyle{model1a-num-names}

\end{document}